\documentclass[12pt]{article}
\pdfoutput = 1
\setlength{\oddsidemargin}{0in} \setlength{\evensidemargin}{0in}
\setlength{\topmargin}{-.2in} \setlength{\headsep}{.2in}
\setlength{\textwidth}{6.5in} \setlength{\textheight}{9.0in} %8.5
\pagestyle{plain}
\parskip 0.5mm
%\nofiles

\usepackage[font=small,labelfont=bf]{caption}
%---------------------------------------------------------------------------------------------
\usepackage{amsmath, amsfonts,latexsym,amsthm, amsxtra, amssymb,graphicx,color}
\usepackage{multirow}
\usepackage{cite}
\usepackage{graphicx}
  %cite [2-4]

\usepackage[colorlinks,
            pdfborder=001,
            linkcolor=blue,
            anchorcolor=blue,
            citecolor=blue
            ]{hyperref}

\newtheoremstyle{mythm}{1.5ex plus 1ex minus .2ex}{1.5ex plus 1ex minus .2ex}
    {\it}{}{\bf}{}{0.5em}{}
\theoremstyle{mythm}

\newtheorem{thm}{Theorem}

%\numberwithin{equation}{section}
\numberwithin{thm}{section}
%\numberwithin{lem}{section}
\numberwithin{rem}{section}
\numberwithin{ex}{section}
%\numberwithin{defn}{section}

\def\beq{\begin{equation}}
\def\eeq{\end{equation}}

\usepackage{titlesec}
\titleformat{\section}[hang]{\normalsize\bfseries}{\thetitle\quad}{0.0ex}{}
\titleformat{\subsection}[hang]{\sl\normalsize}{\thetitle\quad}{0.0ex}{}
\titlespacing*{\section} {0pt}{10pt}{4pt}

\usepackage[hang,flushmargin]{footmisc} %�ý�ע����������
\usepackage[authoryear]{natbib}
\begin{document}

%------------------------------------ title --------------------------------------------
\begin{center}{\Large\bf  Bayesian jackknife empirical likelihood with complex  surveys\footnotetext{Supported by}}
\vskip 4mm

{\bf Mengdong Shang, Xia Chen\footnote{Corresponding author. E-mail: xchen80@snnu.edu.cn}}

\small School of Mathematics and Statistics, Shaanxi Normal University, Xi'an 710119, China

\end{center}
\vskip 4mm

%----------------------------------- abstract ------------------
\noindent{\bf Abstract}\quad
We introduce a novel approach called the Bayesian Jackknife empirical likelihood method for analyzing survey data obtained from various unequal probability sampling designs.  This method is particularly applicable to parameters described by $U$-statistics.  Theoretical proofs establish that under a non-informative prior, the Bayesian Jackknife pseudo-empirical likelihood ratio statistic converges asymptotically to a normal distribution.  This statistic can be effectively employed to construct confidence intervals for complex survey samples.  In this paper, we investigate various scenarios, including the presence or absence of auxiliary information and the use of design weights or calibration weights.  We conduct numerical studies to assess the performance of the Bayesian Jackknife pseudo-empirical likelihood ratio confidence intervals, focusing on coverage probability and tail error rates.  Our findings demonstrate that the proposed methods outperform those based solely on the Jackknife pseudo-empirical likelihood, addressing its limitations.

\noindent{\bf Keywords}\quad
Bayesian inference; Jackknife empirical likelihood; Unequal probability sampling; Design weight; Calibration weight.

\section{Introduction}
With the advancement of society and the economy, complex surveys have emerged as a powerful data collection approach across various scientific domains. Notably, a significant development in recent years is the utilization of the empirical likelihood method in complex survey analysis, which offers the advantage of not relying on distributional assumptions. Pioneered by \citet{Owen88, Owen90}, the empirical likelihood was initially introduced for independent data, enabling the construction of confidence regions for nonparametric and semiparametric inference. This method retains two crucial properties of parametric likelihood, namely Wilks' theorem and Bartlett correction \citet{DiCiccio89, DiCiccio91}, facilitating the extension of empirical likelihood to numerous applications.

In the field of surveys, the notion of empirical likelihood was initially used by \citet{Hartley68} under the term "scale-load" method. Subsequently, \citet{Chen93} formally applied this approach for estimating the population means in the context of simple random sampling. Building upon this, \citet{Chen99} extended the application of pseudo-empirical likelihood to complex surveys with general unequal probability sampling designs, incorporating auxiliary information. They demonstrated that when auxiliary variables are utilized to estimate the population means, the proposed method asymptotically converges to the generalized regression estimation method. Expanding the scope, \cite{Wu06} presented pseudo-empirical likelihood ratio confidence intervals for a single parameter under arbitrary sampling designs. \cite{Zhao22} investigated the sample empirical likelihood methodology for point estimation and hypothesis testing of finite population parameters under a general unequal probability sampling design. They considered estimation equations involving smooth or non-differentiable functions, allowing for over-recognition, and utilized penalty sample empirical likelihood for variable selection, thereby establishing oracle properties.

To expand the application of empirical likelihood to nonlinear statistics, \cite{Jing09} introduced the jackknife empirical likelihood for estimating scalar parameters in one-sample and two-sample $U$-statistics. They demonstrated that the performance of this method surpasses that of the scalar empirical likelihood proposed by \cite{Qin06}. Additionally, \cite{Li16} extended the jackknife empirical likelihood method to encompass vector parameters and nonsmooth estimation equations. This extension allows for broader applicability and greater flexibility in statistical analysis.

Empirical likelihood holds the fundamental properties of estimators, namely consistency and asymptotic normality, enabling its utilization within Bayesian frameworks. The concept of Bayesian empirical likelihood was introduced by \citet{Lazar03} for independent and identically distributed samples. The approach involved setting a prior distribution on the unknown mean value of the generating distribution, which demonstrated the validity and frequency properties of the resulting posterior inference in the framework by Monahan \&\ Boos. It showed that the posterior distribution followed an asymptotic normal distribution. \citet{Cheng19} proposed the Bayesian jackknife empirical likelihood approach, which incorporates the parameter vector defined by U-statistics. They replaced the likelihood component in the Bayesian theorem with jackknife empirical likelihood. The authors established that the resulting posterior confidence interval maintains the correct coverage rate, and the posterior distribution approximately follows a normal distribution. Furthermore, \cite{Rao10} established a method for constructing asymptotically valid Bayesian pseudo-empirical likelihood intervals using the pseudo-empirical likelihood function in the context of general uni-stage unequal probability sampling designs. This approach proves to be highly flexible in incorporating auxiliary population information into the analysis. This method can be adapted to two practical situations: using the basic design weights to combine the known auxiliary general information into the construction of the interval; and using the calibration weights based on known auxiliary population means or totals. \cite{Zhao20} analyzed a finite population parameter vector defined by an over-recognizable estimation system with smooth and non-differentiable functions for complex survey data. In the framework of design-based, Bayesian point estimators and confidence intervals are obtained, and the large sample properties of Bayesian inference are established for no information prior and information prior. They also proposed a Bayesian model selection procedure and an MCMC program for calculating a posterior distribution. The finite sample performance as well as the influence of different types of prior and different types of sampling design, are verified by simulation studies.

This article focuses on the development of the Bayesian jackknife pseudo-empirical likelihood method for parameters defined by the $U$-statistic within the context of general uni-stage unequal probability sampling designs. Section 2 provides a detailed discussion of the Bayesian jackknife empirical likelihood method, considering both basic design weights and calibration weights. In Section 3, we analyze the asymptotic properties of Bayesian jackknife pseudo-empirical log-likelihood ratio functions, both with and without auxiliary population information. In Section 4, we present the results of simulation studies conducted to evaluate the performance of the proposed method. Moreover, we demonstrate the application of the method to the data set obtained from the 2019 China Household Finance Survey (CHFS) in Section 5. Finally, we conclude our findings in Section 6 and provide proof of the theoretical results in the Appendix.

\section{Methodology}
Let $\mathbf{y}= \left ( y_{1}, \ldots, y_{n} \right )^{T}$ represent a vector of independent and identically distributed observations of the random variable $Y$. And let $\theta$ denote a parameter of interest in the finite population. We define the $U$-statistic as follows:
\begin{equation*}
T_{n} = T \left ( y_{1}, \ldots, y_{n} \right ) = \begin{pmatrix} n \\ m \end{pmatrix}^{-1} \sum_{ 1 \leq i_{1} < \cdots < i_{m} \leq n } h \left ( y_{i_{1}}, \ldots, y_{i_{m}} \right ) 
\end{equation*}
be an unbiased estimator of $\theta$ and $h \left ( \cdot \right)$ be a kernel function. Define the jackknife pseudo-value,
\begin{equation*}
\hat{v}_{i}=nT_{n}-\left ( n-1 \right )T_{n-1}^{\left ( -i \right )}
\end{equation*}
can be used as a novel estimator for $\theta$ pertaining to the $i$th unit. Here, $T_{n-1}^{\left ( -i \right )}=T\left ( y_{1}, \ldots, y_{i-1}, \right.$ 
$\left. y_{i+1},\ldots, y_{n} \right )$ is obtained from the original data set using $n-1$ variables by omitting the $i$th data value.

Then, under the conditions $ \sum_{i = 1}^{n} p_{i} = 1$ and $ \sum_{i =1}^{n} p_{i} \left ( \hat{v}_{i} - \theta \right ) =0$, the estimation of $\theta$ via jackknife empirical likelihood involves the maximization of $\sum_{i =1}^{n} log \left ( p_{i} \right )$. To achieve this, we employ the method of Lagrange multipliers by setting the partial derivative of

\begin{equation*}
G = \sum_{i =1}^{n} log \left ( p_{i} \right ) - n \lambda \sum_{i =1}^{n} p_{i} \left ( \hat{v}_{i} - \theta \right ) - \gamma \left ( \sum_{i = 1}^{n} p_{i} - 1 \right )
\end{equation*}
with respect to $p_{i}$ equal to $0$. Then let
\begin{equation*}
\ell_{JEL} \left ( \theta \right ) = -n log \left ( n \right ) - \sum_{i =1}^{n} log \left \{ 1+ \lambda \left ( \theta \right ) \left ( \hat{v}_{i} - \theta \right ) \right \}
\end{equation*}
and $ \lambda = \lambda \left ( \theta \right )$ solves
\begin{equation*}
\sum_{i =1}^{n} \left ( \hat{v}_{i} - \theta \right ) / \left \{ 1+ \lambda \left ( \hat{v}_{i} - \theta \right ) \right \} = 0.
\end{equation*}
We integrate $L_{JEL} \left ( \theta \right ) = exp \left \{ \ell_{JEL} \left ( \theta \right ) \right \}$ with a predefined prior distribution $\pi \left ( \theta \right )$ for $ \theta $ using the Bayesian theorem to gain a pseudo posterior distribution
\begin{equation} \label{pi}
\pi \left ( \theta | \mathbf{ \hat{v}} \right) = c \left ( \mathbf{ \hat{v}} \right ) exp \left [ log \left \{  \pi \left ( \theta \right ) \right \} - \sum_{i = 1}^{n} log \left \{ 1+ \lambda \left ( \theta \right ) \left ( \hat{v}_{i} - \theta \right ) \right \} \right ],
\end{equation}
where $c \left ( \mathbf{ \hat{v}} \right )$ is the normalizing constant such that $ \int \pi \left ( \theta | \mathbf{ \hat{v}} \right ) d \theta =1 $. The $ \alpha $th quantile $t_{ \alpha}$ of $ \pi \left ( \theta | \mathbf{ \hat{v}} \right )$ is determined by $ H \left ( t_{ \alpha} | \mathbf{\hat{v}} \right )  = \alpha $ where
\begin{equation*}
H \left ( t_{ \alpha} | \mathbf{\hat{v}} \right ) = \int_{- \infty}^{t} \pi \left ( \theta | \mathbf{ \hat{v}} \right ) d \theta .
\end{equation*}
The validity of posterior inferences based on equation (\ref{pi}) can be assessed by examining the coverage probabilities of intervals $ \left ( t_{\alpha_{1}}, t_{\alpha_{2}} \right )$ for $\theta$. This assessment can be conducted through Monte Carlo simulations for selected values of $ \alpha_{1}$ and $ \alpha_{2}$. As proposed by \citet{Monahan92}, the posterior distribution $ \pi \left ( \theta | \mathbf{ \hat{v}} \right)$ should be considered valid if the statistic $H = H \left ( t_{ \alpha} | \mathbf{\hat{v}} \right )$ follows a uniform distribution over $ \left ( 0,1 \right )$.

Now assume a finite population comprising $N$ units, where each unit is characterized by the variable $y_{i}$ and a vector of auxiliary variables, denoted as $\mathbf{x}_i$. We take into account that the population mean vector is already known and represented as $\bar{\mathbf{X}} = N^{-1} \sum_{i=1}^{N} \mathbf{x}_i$. Let $s$ represent the set of $n$ units selected in the sample and for each unit, we define $\pi_i = P(i \in s), i = 1, \ldots, N$ as the probability of inclusion in the sample.

\subsection{Bayesian jackknife empirical likelihood with basic design weights}
Let us consider uni-stage sampling designs with a fixed sample size $n$. In this context, the basic design weights $d_i = 1/\pi_i$ and the normalized design weights $\tilde{d}_i(s) = d_i / \sum_{i \in s} d_i$ are employed for the given sample $s$.

The pseudo-empirical likelihood (PEL) method for complex survey data was initially proposed by \citet{Chen99}. \citet{Wu06} introduced the concept of the profile pseudo-empirical log-likelihood (PELL) function for estimating $\theta$. Building upon this, we extend the PELL framework to incorporate the jackknife pseudo-value, which is given by
\begin{equation} \label{JEL1}
\ell_{JEL_{d}} \left ( \theta \right ) = n^{\ast} \sum_{i \in s} \tilde{d}_{i} \left ( s \right ) \log \left \{ p_{i} \left ( \theta \right ) \right \}.
\end{equation}
The pseudo-modified empirical likelihood estimator (pseudo-MELE) for the parameter $\theta$ can be obtained using the Hajek estimator $\hat{V}_H = \sum_{i \in s} \hat{p}_{i} \hat{v}_i =\sum_{i \in s} \tilde{d}_i(s) \hat{v}_i$.

In the absence of auxiliary population information, we consider the effective sample size denoted as $n^* = n / deff_H$, where the design effect $deff_H$ is computed based on the Hajek estimator $\hat{V}_H$. The design effect is defined as
\begin{equation} \label{deffH}
deff_{H} = V_{p} \left ( \hat{V}_{H} \right ) / \left ( S_{v}^{2} / n \right ).
\end{equation}
The symbol $V_{p} \left ( \cdot \right )$ represents the variance under the specific design, while $S_{v}^2 / n$ denotes the variance of $\hat{V}_{H}$ under simple random sampling. When considering a fixed $\theta$, the objective is to maximize (\ref{JEL1}) while satisfying the constraints $p_i > 0$, $\sum_{i \in s} p_i = 1$, and
\begin{equation} \label{Con1}
\sum_{i \in s}p_{i}\hat{v_{i}}=\theta,
\end{equation}
leads to $\hat{p}_{i} \left ( \theta \right ) = \tilde{d}_{i} \left ( s \right ) / \left \{ 1 + \lambda \left ( \hat{v}_{i} - \bar{V} \right ) \right \}, \bar{V} = \sum_{i \in s} \hat{v}_{i} /n $, where the Lagrange multipiler $\lambda$ is the solution to 
\begin{equation} \label{Sol1}
\sum_{i \in s} \frac{\tilde{d}_{i} \left ( s \right ) \left ( \hat{v}_{i} - \bar{V} \right )}{1 + \lambda \left ( \hat{v}_{i} - \bar{V} \right )} = 0.
\end{equation}

Under the non-informative prior $\pi \left ( \theta \right ) \propto 1 $, the posterior density function of $\theta$ within the framework of the pseudo-Bayesian jackknife empirical likelihood is given by
\beq \label{Post1}
\pi_{1} \left ( \theta | \mathbf{\hat{v}} \right ) = c \left ( \mathbf{\hat{v}} \right ) exp \left [ -n^{\ast} \sum_{i \in s} \tilde{d}_{i} \left ( s \right ) log \left \{ 1 + \lambda \left ( \theta \right ) \left ( \hat{v}_{i} - \theta \right ) \right \} \right ],
\eeq
where $c \left ( \mathbf{\hat{v}} \right )$ is the normalizing constant such that $ \int \pi_{1} \left ( \theta | \mathbf{\hat{v}} \right ) d \theta = 1$.

We proceed to incorporate auxiliary population information by introducing an additional constraint, namely,
\begin{equation} \label{Con2}
\sum_{i \in s}p_{i}\mathbf{x}_{i}= \bar{\mathbf{X}}.
\end{equation}

According to \citet{Wu06}, the generalized regression (GREG) estimator of $\bar{V}$ is defined as $\hat{V}_{GR} = \hat{V}_{H} + {B}^\text{T} \left( \bar{\mathbf{X}} - \hat{\bar{\mathbf{X}}}_{H} \right)$, where $\hat{\bar{\mathbf{X}}}_{H} = \sum_{i \in s} \tilde{d}_{i} \left( s \right) \mathbf{x}_{i}$ represents the Hajek estimator, and
\begin{equation} \label{B}
B = \left \{ \frac{1}{N} \sum_{i=1}^{N} \left ( \mathbf{x}_{i} - \hat{\bar{\mathbf{X}}} \right ) {\left ( \mathbf{x}_{i} - \hat{\bar{\mathbf{X}}} \right ) }^\text{T} \right \}^{-1} \left \{ \frac{1}{N} \sum_{i=1}^{N} \left ( \mathbf{x}_{i} - \hat{\bar{\mathbf{X}}} \right )\left ( \hat{v} _{i} - \bar{V} \right ) \right \}
\end{equation}
is the vector of population regression coefficients. In this scenario, the design effect associated with $\hat{V}_{GR}$ differs from the one defined in (\ref{deffH}). It is defined as follows:
\begin{equation*}
deff_{GR} = V_{p} \left ( \hat{V}_{GR} \right ) / \left ( S_{r}^{2} / n \right ),
\end{equation*}
where $\hat{V}_{GR} = \sum_{i \in s} \tilde{d}_{i} \left ( s \right ) r_{i}$, $r_{i} = \hat{v}_{i} - \bar{V} - {B}^\text{T} \left ( \mathbf{x}_{i} - \bar{\mathbf{X}} \right )$, $S_{r}^{2} / n$ is the variance of $\hat{V}_{GR}$ under simple random sampling, and $S_{r}^{2} = \left ( N - 1 \right )^{-1} \sum_{i = 1}^{N} r_{i}^{2}$. Consequently, the effective sample size can be obtained as $n^{\ast}= n / deff_{GR}$.

Let $\mathbf{u}_{i} = {\left ( \hat{v}_{i} - \theta, {\left ( \mathbf{x}_{i} - \bar{\mathbf{X}} \right )}^\text{T} \right ) }^\text{T} $. It can be demonstrated that, under a fixed $\theta$, maximizing (\ref{JEL1}) subject to $p_{i}>0$, \ $\sum_{i \in s}p_{i}=1$, (\ref{Con1}) and (\ref{Con2}), leads to the expression $\tilde{p}_{i} \left ( \theta \right ) = \tilde{d}_{i} \left ( s \right ) / \left \{ 1 + {\bf{\lambda}}^\text{T} \left ( \theta \right ) \mathbf{u}_{i} \right \}$, where the vector-valued Lagrange multiplier $\bf{\lambda} = \bf{\lambda} \left ( \theta \right )$ is the solution to the following equation:
\begin{equation} \label{Sol2}
\sum_{i \in s} \frac{\tilde{d}_{i} \left ( s \right ) \mathbf{u}_{i}}{1 + {\bf{\lambda}}^\text{T} \mathbf{u}_{i}} = 0.
\end{equation}
The existence and uniqueness of a solution to (\ref{Sol2}) can be established if the zero vector is an interior point of the convex hull of the set $\left \{ \mathbf{u}_{i}, i \in s \right \}$, as shown in \citet{Chen02}.

Under the non-informative prior $\pi \left ( \theta \right ) \propto 1 $, the posterior density function of $\theta$ within the framework of the pseudo-Bayesian jackknife empirical likelihood is given by
\beq \label{Post2}
\pi_{2} \left ( \theta | \mathbf{\hat{v}}, \mathbf{u} \right ) = c \left ( \mathbf{\hat{v}}, \mathbf{u} \right ) exp \left [ - n^{\ast} \sum_{i \in s} \tilde{d}_{i} \left ( s \right ) log \left \{ 1 + \lambda' \left ( \theta \right ) \mathbf{u}_{i} \right \} \right ],
\eeq
where $c \left ( \mathbf{\hat{v}}, \mathbf{u} \right )$ is the normalizing constant such that $ \int \pi_{2} \left ( \theta | \mathbf{\hat{v}},\mathbf{u} \right ) d \theta = 1$.

\subsection{Bayesian jackknife empirical likelihood with calibration weights}
We now consider the calibration weights $w_{i}$, which are provided by statistical agencies. Let $\tilde{w}_{i} \left ( s \right ) = w_{i} / \sum_{i \in s} w_{i}$ be the normalized calibration weights, and $\sum_{i \in s} \tilde{w}_{i} \hat{v}_{i}$ is calibrated in the sense of $\sum_{i \in s} \tilde{w}_{i}\mathbf{x}_{i} = \mathbf{X}$, where $\mathbf{X} = N\bar{\mathbf{X}}$ is the known population total of the auxiliary vector $\mathbf{x}$.

The jackknife pseudo-empirical log-likelihood function for calibration weights is given by 
\begin{equation} \label{JEL2}
\ell_{JEL_{w}} \left ( \theta \right ) = m \sum_{i \in s} \tilde{w}_{i} \left ( s \right ) \log \left ( p_{i} \left ( \theta \right ) \right ),
\end{equation}
where $m$ is a scale factor to be determined. Here, we choose $m$ as a design-consistent estimator of $S_{v}^{2} / V_{p} \left ( \hat{V}_{GR} \right )$. For a fixed $\theta$, maximizing (\ref{JEL2}) subject to $p_{i} > 0$, $\sum_{i \in s} p_{i} = 1$, and constraint (\ref{Con1}) leads to $\breve{p}_{i} \left ( \theta \right ) = \tilde{w}_{i} \left ( s \right ) / \left \{ 1 + \lambda \left ( \hat{v}_{i} - \bar{V} \right ) \right \}$, where the Lagrange multiplier $\lambda$ satisfies
\begin{equation} \label{Sol3}
\sum_{i \in s} \frac{\tilde{w}_{i} \left ( s \right ) \left ( \hat{v}_{i} - \bar{V} \right )}{1 + \lambda \left ( \hat{v}_{i} - \bar{V} \right )} = 0.
\end{equation}
We observe that the equation (\ref{Con2}) does not restrict $\ell_{JEL_{w}}$ anymore.

Under the non-informative prior $\pi \left ( \theta \right ) \propto 1 $, the posterior density function of $\theta$ within the framework of the pseudo-Bayesian jackknife empirical likelihood is given by
\beq \label{Post3}
\pi_{3} \left ( \theta | \mathbf{\hat{v}} \right ) = c \left ( \mathbf{\hat{v}} \right ) exp \left [ - m \sum_{i \in s} \tilde{w}_{i} \left ( s \right ) log \left \{ 1 + \lambda \left ( \theta \right ) \left ( \hat{v}_{i} - \theta \right ) \right \} \right ],
\eeq
where $c \left ( \mathbf{\hat{v}} \right )$ is the normalizing constant such that $ \int \pi \left ( \theta | \mathbf{\hat{v}} \right ) d \theta = 1$.

\section{Main results}
In this section, we derive the asymptotic distribution of the Bayesian jackknife pseudo-empirical log-likelihood ratio functions, subject to specific regularity conditions, both when auxiliary population information is absent and present. We illustrate that, in all scenarios, the asymptotic distribution converges to a normal distribution. Elaborate proofs of these theorems are included in the appendix for reference.

\subsection{Bayesian jackknife empirical likelihood without auxiliary population information}
We first investigate the case of basic design weights without auxiliary population information. Maximizing the $\ell_{JEL_{d}}$ subject to $p_{i} > 0$ and $\sum_{i \in s}p_{i} = 1$ yields $\hat{p}_{i} = \tilde{d}_{i} \left ( s \right )$, and the effective sample size is denoted as $n^{\ast} = n / deff_{H}$.

To obtain our results, we need the following regularity conditions.

\noindent{\bf (C1)}\quad The sampling design $p\left ( s \right )$, the variable $y$, and the kernel function $h \left ( \cdot \right )$ satisfy
\begin{equation*}
\max_{i_{1}, \ldots, i_{m} \in s}\left | h \left( y_{i_{1}}, \ldots, y_{i_{m}} \right ) \right | = o_{p}\left ( n^{\frac{1}{2}} \right ),
\end{equation*}
where the random order $o_{p}\left ( \cdot  \right )$ is relevant for the sampling design $p\left ( s \right )$.

\noindent{\bf (C2)}\quad The sampling design $p\left ( s \right )$ satisfies $N^{-1} \sum_{i \in s}d_{i}-1=O_{p} \left (n^{ -1/2} \right)$.

Condition \noindent{\bf (C1)} imposes constraints on the sampling design $p \left ( s \right )$, the kernel function $h \left ( \cdot \right )$, and the finite population $\left \{ y_{1}, \ldots  , y_{N} \right \}$. These constraints imply that
\begin{equation*}
max_{i \in s} \left | nT_{n}-\left ( n-1 \right )T_{n-1}^{\left ( -i \right )} \right | = o_{p}\left ( n^{1/2} \right ),
\end{equation*}
which can be further simplified to $max_{i \in s} \left | \hat{v}_{i} \right | = o_{p}\left ( n^{1/2} \right )$.  Condition \noindent{\bf (C2)} states that $\hat{N} = \sum_{i \in s} d_{i}$ is a $\sqrt{n}$-consistent estimator of $N$.

%-------------------------------------------------------Theorem 1-----------------------------------------------------
\begin{thm}\label{thm1}
Under the conditions {\rm (C1)} and {\rm (C2)}, the posterior distribution of $\theta$ has the expansion for $ \{ \theta: \theta - \hat{\theta}_{JEL} = O_{p} \left ( 1/ \sqrt{n} \right ) \}:$
\begin{equation}
\pi_{1} \left ( \theta | \mathbf{\hat{v}} \right ) \propto exp \left \{ - \frac{1}{2} \left ( \theta - \hat{ \theta}_{JEL}  \right )^{2} /  V \left ( \hat{ \theta}_{JEL} \right ) + o_{p} \left ( 1 \right ) \right \},
\end{equation}
where $V \left ( \hat{ \theta}_{JEL} \right )$ is the design-based variance of $\hat{ \theta}_{JEL}$.

\end{thm}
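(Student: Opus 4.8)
The plan is to establish a Bernstein--von Mises type expansion by showing that the exponent of the posterior (\ref{Post1}) is, up to a $\theta$-free constant that is absorbed into $c(\mathbf{\hat{v}})$, a quadratic in $\theta$ whose curvature reproduces the design-based variance. Writing $g_{i}(\theta) = \hat{v}_{i} - \theta$ for $i \in s$, the first step is to control the Lagrange multiplier $\lambda(\theta)$ appearing in (\ref{Post1}), which solves $\sum_{i \in s} \tilde{d}_{i}(s)(\hat{v}_{i} - \theta)/\{1 + \lambda(\hat{v}_{i} - \theta)\} = 0$, uniformly over the local neighborhood $\{\theta : \theta - \hat{\theta}_{JEL} = O_{p}(n^{-1/2})\}$. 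Following the standard empirical-likelihood argument of \citet{Owen90} and \citet{Wu06}, and using condition (C1) in the form $\max_{i \in s}|\hat{v}_{i}| = o_{p}(n^{1/2})$ together with condition (C2) guaranteeing that $\hat{N} = \sum_{i \in s} d_{i}$ is a $\sqrt{n}$-consistent estimator of $N$, I expect to obtain $\lambda(\theta) = O_{p}(n^{-1/2})$ and the sharper expansion
\begin{equation*}
\lambda(\theta) = \frac{\sum_{i \in s} \tilde{d}_{i}(s)\, g_{i}(\theta)}{\sum_{i \in s} \tilde{d}_{i}(s)\, g_{i}(\theta)^{2}} + o_{p}(n^{-1/2}) = \frac{\hat{V}_{H} - \theta}{S_{v}^{2}} + o_{p}(n^{-1/2}),
\end{equation*}
where I have used $\sum_{i \in s} \tilde{d}_{i}(s) g_{i}(\theta) = \hat{V}_{H} - \theta$ and the design-consistency of $\sum_{i \in s} \tilde{d}_{i}(s) g_{i}(\theta)^{2}$ for $S_{v}^{2}$.

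The second step is to insert this expansion into the exponent of (\ref{Post1}). Applying $\log(1+x) = x - \tfrac{1}{2}x^{2} + O(x^{3})$ with $x = \lambda(\theta) g_{i}(\theta) = O_{p}(n^{-1/2})$, and using the defining Lagrange equation to cancel the linear term, I would reduce the exponent to
\begin{equation*}
-n^{\ast} \sum_{i \in s} \tilde{d}_{i}(s) \log\{1 + \lambda(\theta) g_{i}(\theta)\} = -\frac{n^{\ast}}{2}\, \frac{(\hat{V}_{H} - \theta)^{2}}{S_{v}^{2}} + o_{p}(1),
\end{equation*}
the third-order remainder being controlled by (C1). Since the quadratic on the right is maximized at $\hat{V}_{H}$, the pseudo-MELE satisfies $\hat{\theta}_{JEL} = \hat{V}_{H} + o_{p}(n^{-1/2})$, so re-centering gives $-\tfrac{1}{2}(\theta - \hat{\theta}_{JEL})^{2}/(S_{v}^{2}/n^{\ast}) + o_{p}(1)$.

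The third step is to identify $S_{v}^{2}/n^{\ast}$ with the design-based variance. By the definition $n^{\ast} = n/deff_{H}$ together with (\ref{deffH}), one has $n^{\ast} = S_{v}^{2}/V_{p}(\hat{V}_{H})$, whence $S_{v}^{2}/n^{\ast} = V_{p}(\hat{V}_{H}) = V(\hat{\theta}_{JEL})$; this is precisely the role of the effective sample size, namely to calibrate the empirical-likelihood curvature to the true design variance. Exponentiating and absorbing the $\theta$-free factors into the normalizing constant $c(\mathbf{\hat{v}})$ then yields the stated expansion.

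The main obstacle, I expect, is making the first two steps rigorous uniformly over the $O_{p}(n^{-1/2})$ neighborhood when the randomness is design-based rather than i.i.d.\ and the pseudo-values $\hat{v}_{i}$ are themselves sample-dependent through the shared statistic $T_{n}$. To handle this I would invoke the Hajek-projection (first-order jackknife) representation of the $U$-statistic pseudo-values, under which the $\hat{v}_{i}$ behave asymptotically like i.i.d.\ influence terms, combined with a design-based central limit theorem and variance expansion for the Hajek estimator $\hat{V}_{H}$. Conditions (C1)--(C2) are exactly what is needed to bound the resulting remainder and to guarantee that $\lambda(\theta)$ stays in the region where the logarithmic expansion is valid.
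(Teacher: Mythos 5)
Your proposal is correct and follows essentially the same route as the paper's proof: bound the Lagrange multiplier via (C1)--(C2) and the Hajek central limit theorem for the Horvitz--Thompson estimator, expand $\lambda(\theta)$, Taylor-expand the logarithm to second order with the cubic remainder controlled by (C1), and calibrate the resulting quadratic curvature to the design-based variance through the effective sample size $n^{\ast}=S_{v}^{2}/V_{p}\left(\hat{V}_{H}\right)$. If anything, your explicit chain $S_{v}^{2}/n^{\ast}=V_{p}\left(\hat{V}_{H}\right)=V\left(\hat{\theta}_{JEL}\right)$ (with $\hat{\theta}_{JEL}=\hat{V}_{H}$, the pseudo-MELE) is stated more cleanly than the paper's own display, which writes the variance with numerator and denominator inverted.
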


\subsection{Bayesian jackknife empirical likelihood with auxiliary population information}
Now we consider the case of known auxiliary vectors $\mathbf{x}_{i}$ and basic design weights. Maximizing $\ell_{JEL_{d}}$ subject to $p_{i} > 0$, $\sum_{i \in s}p_{i} = 1$ and (\ref{Con2}), yields the estimated weights $\tilde{p}_{i}$ and an effective sample size $n^{\ast} = n / deff_{GR}$.

Assuming the additional regularity condition on the auxiliary vectors $\mathbf{x}_{i}$ hold,

\noindent{\bf (C3)}\quad $max_{i \in s}\left \| \mathbf{x}_{i}  \right \| = o_{p} \left ( n^{1/2} \right )$, where $\left \| \cdot \right \|$ represents the $L_{1}$ norm.

%-------------------------------------------------------Theorem 2-----------------------------------------------------
\begin{thm}\label{thm2}
Under the conditions {\rm (C1)-(C3)}, the posterior distribution of $\theta$ has the expansion for $ \{ \theta: \theta - \hat{\theta}_{PEL} = O_{p} \left ( 1/ \sqrt{n} \right ) \}:$
\begin{equation}
\pi_{2} \left ( \theta | \mathbf{\hat{v}}, \mathbf{x} \right ) \propto exp \left \{ - \frac{1}{2} \left ( \theta - \tilde{ \theta}_{JEL} \right )^{2} / V \left ( \tilde{ \theta}_{JEL} \right ) + o_{p} \left ( 1 \right ) \right \},
\end{equation}
where $V \left ( \tilde{ \theta}_{JEL} \right )$ is the design-based variance of $\tilde{ \theta}_{JEL}$.
\end{thm}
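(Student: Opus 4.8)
The plan is to establish a Laplace-type quadratic expansion of the posterior exponent in (\ref{Post2}), paralleling the argument for Theorem~\ref{thm1} but now handling the vector-valued Lagrange multiplier $\lambda(\theta)$ forced by the auxiliary constraint (\ref{Con2}). Write $r_n(\theta)=-n^{\ast}\sum_{i\in s}\tilde d_i(s)\log\{1+\lambda^{\mathrm T}(\theta)\mathbf u_i\}$ for the exponent of $\pi_2$, with $\mathbf u_i=(\hat v_i-\theta,(\mathbf x_i-\bar{\mathbf X})^{\mathrm T})^{\mathrm T}$. The first step is to bound $\lambda(\theta)$: using the defining equation (\ref{Sol2}), the interior-point convex-hull condition of \citet{Chen02}, and conditions (C1)--(C3) (which together give $\max_{i\in s}\|\mathbf u_i\|=o_p(n^{1/2})$ and boundedness of the $\tilde d_i$-weighted moments of $\mathbf u_i$), I would show that $\lambda(\theta)=O_p(n^{-1/2})$ uniformly over the shrinking set $\{\theta:\theta-\tilde\theta_{JEL}=O_p(n^{-1/2})\}$.

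Second, I would substitute the expansion $\log(1+\lambda^{\mathrm T}\mathbf u_i)=\lambda^{\mathrm T}\mathbf u_i-\tfrac12(\lambda^{\mathrm T}\mathbf u_i)^2+O_p(|\lambda^{\mathrm T}\mathbf u_i|^3)$ into (\ref{Sol2}) to obtain the first-order representation $\lambda(\theta)=S^{-1}\bar{\mathbf u}(\theta)+o_p(n^{-1/2})$, where $\bar{\mathbf u}(\theta)=\sum_{i\in s}\tilde d_i(s)\mathbf u_i$ and $S=\sum_{i\in s}\tilde d_i(s)\mathbf u_i\mathbf u_i^{\mathrm T}$. Feeding this back into $r_n(\theta)$, the linear and quadratic contributions combine to leave the leading form $r_n(\theta)=-\tfrac{n^{\ast}}{2}\,\bar{\mathbf u}(\theta)^{\mathrm T}S^{-1}\bar{\mathbf u}(\theta)+o_p(1)$.

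Third, and this is the crux, I would exploit the block structure of $\mathbf u_i$: only the first coordinate $\hat V_H-\theta$ of $\bar{\mathbf u}(\theta)$ varies with $\theta$, the auxiliary block $\hat{\bar{\mathbf X}}_H-\bar{\mathbf X}$ being fixed. A Schur-complement decomposition then splits $\bar{\mathbf u}^{\mathrm T}S^{-1}\bar{\mathbf u}$ into a $\theta$-free piece plus $\{\hat V_H-\theta-\hat B^{\mathrm T}(\hat{\bar{\mathbf X}}_H-\bar{\mathbf X})\}^2/S_{aa\cdot b}$, where $\hat B$ is the sample analogue of $B$ from (\ref{B}) and $S_{aa\cdot b}$ is the associated Schur complement (the $\tilde d_i$-weighted second moment of the residuals). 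Since $\hat V_H-\hat B^{\mathrm T}(\hat{\bar{\mathbf X}}_H-\bar{\mathbf X})$ is precisely the GREG estimator $\hat V_{GR}$, the posterior mode is identified as $\tilde\theta_{JEL}=\hat V_{GR}$, the $\theta$-free piece is absorbed into $c(\hat{\mathbf v},\mathbf u)$, and completing the square yields $\pi_2(\theta\mid\hat{\mathbf v},\mathbf x)\propto\exp\{-\tfrac12(\theta-\tilde\theta_{JEL})^2\,n^{\ast}/S_{aa\cdot b}+o_p(1)\}$.

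The final and most delicate step is to confirm that $n^{\ast}/S_{aa\cdot b}=1/V(\tilde\theta_{JEL})$, the design-based variance. Replacing $\hat B$ by its population limit $B$, the quantity $S_{aa\cdot b}$ is design-consistent for $S_r^2$ with $r_i=\hat v_i-\bar V-B^{\mathrm T}(\mathbf x_i-\bar{\mathbf X})$; combining this with $n^{\ast}=n/deff_{GR}$ and the definition $deff_{GR}=V_p(\hat V_{GR})/(S_r^2/n)$ collapses the curvature to $1/V_p(\hat V_{GR})=1/V(\tilde\theta_{JEL})$. I expect the main obstacle to be exactly this calibration: the effective sample size $n^{\ast}$ is what converts the naive empirical-likelihood curvature (which is blind to the sampling design) into the correct design-based variance, and making it rigorous requires uniform control of the design-consistency of $\hat B$, $S$, and the weighted residual moment, together with showing the cubic remainder in the log expansion is $o_p(1)$ over the $O_p(n^{-1/2})$ neighborhood. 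The remaining bookkeeping---uniform negligibility of higher-order terms and the replacement of $\hat B$ by $B$---is routine once these orders are secured.
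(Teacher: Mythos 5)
Your proposal is correct and reaches the right quadratic expansion, but it takes a recognizably different algebraic route from the paper. The paper's proof rests on two moves you do not make: first it reformulates the constrained maximization by replacing the constraint (\ref{Con1}) with $\sum_{i\in s}p_i r_i=0$, where $r_i=\hat v_i-\bar V-B^{\mathrm T}(\mathbf x_i-\bar{\mathbf X})$ uses the \emph{population} regression coefficient $B$ from (\ref{B}) (its display (\ref{A.4})); second, it compares two profile expansions --- the maximum under the auxiliary constraint alone, (\ref{A.3}), against the fully constrained maximum, (\ref{A.5}), in the coordinates $\mathbf u_i=(\mathbf x_i^{\mathrm T},r_i)^{\mathrm T}$ --- and then observes that the middle matrix in (\ref{A.5}) estimates a population matrix that is exactly block diagonal, since $\sum_{i=1}^{N}(\mathbf x_i-\bar{\mathbf X})r_i=0$ by construction of $B$, so the quadratic form decouples without any matrix inversion of a partitioned sample matrix. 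You instead expand the posterior exponent of (\ref{Post2}) once, in the original coordinates $\mathbf u_i=(\hat v_i-\theta,(\mathbf x_i-\bar{\mathbf X})^{\mathrm T})^{\mathrm T}$, and perform a Schur-complement decomposition of the sample matrix $S$; this is algebraically the same decoupling (your $S_{aa\cdot b}$ is the sample-level counterpart of the paper's residual second moment, and your $\hat B=S_{bb}^{-1}S_{ba}$ replaces the paper's population $B$), but it mechanically identifies the posterior mode as the GREG estimator $\hat V_{GR}$ and forces you to add the consistency step $\hat B\to B$, $S_{aa\cdot b}\to S_r^2$ at the end --- a step the paper sidesteps by putting $B$ into the constraints from the start, at the price of having to argue the equivalence of the two constraint sets. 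Your final calibration argument, showing $n^{\ast}/S_{aa\cdot b}=1/V_p(\hat V_{GR})$ via $n^{\ast}=n/deff_{GR}$ and the definition of $deff_{GR}$, is actually spelled out more explicitly than in the paper, which essentially asserts the passage from the quadratic form to $V(\tilde\theta_{JEL})$; and your uniformity and cubic-remainder caveats are no weaker than what the paper itself leaves implicit. In short: the paper buys a transparent population-level orthogonality at the cost of a reparameterization lemma; your Schur-complement route stays at the sample level, is more self-contained on the curvature identification, and is equally valid.
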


Next, we turn to the case of calibration weights with known auxiliary vectors $\mathbf{x}_{i}$. Under the same constraints, the expressions for $\breve{p}$ and $\breve{r}_{JEL_{w}}$ are identical to those of $\hat{p}$ and $\hat{r}_{JEL_{d}}$ except for the different weights. Therefore, a result similar to Theorem \ref{thm1} can be obtained. By maximizing $\ell_{JEL_{w}}$ subject to $p_{i} > 0$ and $\sum_{i \in s}p_{i} = 1$, we obtain $\breve{p}_{i}$ and set $m = S_{v}^{2} / V_{p} \left ( \hat{V}_{GR} \right )$.

We assume the following regularity condition.

\noindent{\bf (C4)}\quad The sampling design $p\left ( s \right )$ satisfies $N^{-1} \sum_{i \in s}w_{i}-1=O_{p} \left (n^{-1/2}\right) $.

%-------------------------------------------------------Theorem 3-----------------------------------------------------
\begin{thm}\label{thm3}
Under the conditions {\rm (C1)}, {\rm (C2)} and {\rm (C4)}, the posterior distribution of $\theta$ has the expansion for $ \left \{ \theta: \theta - \breve{\theta}_{PEL} = O_{p} \left ( 1/ \sqrt{n} \right ) \right \}:$
\begin{equation}
\pi_{3} \left ( \theta | \mathbf{\hat{v}} \right ) \propto exp \left \{ - \frac{1}{2} \left ( \theta - \breve { \theta}_{JEL} \right )^{2} / V \left ( \breve { \theta}_{JEL} \right ) + o_{p} \left ( 1 \right ) \right \},
\end{equation}
where $V \left ( \breve{ \theta}_{JEL} \right )$ is the design-based variance of $\breve{ \theta}_{JEL}$.
\end{thm}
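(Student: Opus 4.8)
The plan is to mirror the expansion of Theorem~\ref{thm1}, replacing the normalized design weights $\tilde{d}_{i}(s)$ by the normalized calibration weights $\tilde{w}_{i}(s)$ and the effective sample size $n^{\ast}$ by the scale factor $m$; the only genuinely new ingredient is the verification that the resulting quadratic form reproduces the design-based variance $V(\breve{\theta}_{JEL})$. Writing $g_{i}(\theta) = \hat{v}_{i} - \theta$, I would first show that the Lagrange multiplier $\lambda(\theta)$ solving (\ref{Sol3}) satisfies $\lambda(\theta) = O_{p}(n^{-1/2})$ uniformly over the shrinking neighborhood $\left\{\theta: \theta - \breve{\theta}_{JEL} = O_{p}(n^{-1/2})\right\}$. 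This follows from the standard empirical-likelihood bound: condition (C1) gives $\max_{i \in s}|\hat{v}_{i}| = o_{p}(n^{1/2})$, condition (C4) guarantees that $\sum_{i \in s} w_{i}$ is $\sqrt{n}$-consistent for $N$ so that the normalized weights behave like a genuine probability vector, and together with $\sum_{i \in s}\tilde{w}_{i}(s) g_{i}(\theta) = \breve{\theta}_{JEL} - \theta = O_{p}(n^{-1/2})$ these control the multiplier exactly as in the proof of Theorem~\ref{thm1}.

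Next I would Taylor-expand the defining equation (\ref{Sol3}) in powers of $\lambda g_{i}$. Solving to leading order yields
\[
\lambda(\theta) = \frac{\sum_{i \in s}\tilde{w}_{i}(s) g_{i}(\theta)}{\sum_{i \in s}\tilde{w}_{i}(s) g_{i}(\theta)^{2}}\bigl(1 + o_{p}(1)\bigr),
\]
and substituting this into the exponent of (\ref{Post3}) after expanding $\log\{1 + \lambda g_{i}\} = \lambda g_{i} - \tfrac{1}{2}\lambda^{2} g_{i}^{2} + \cdots$ collapses the first-order term against the second to give
\[
-m\sum_{i \in s}\tilde{w}_{i}(s)\log\{1 + \lambda(\theta) g_{i}(\theta)\} = -\frac{m}{2}\,\frac{\bigl(\sum_{i \in s}\tilde{w}_{i}(s) g_{i}(\theta)\bigr)^{2}}{\sum_{i \in s}\tilde{w}_{i}(s) g_{i}(\theta)^{2}} + o_{p}(1).
\]
Since $\sum_{i \in s}\tilde{w}_{i}(s) g_{i}(\theta) = \breve{\theta}_{JEL} - \theta$ and the denominator is a design-consistent estimator $S_{w}^{2}$ of the finite-population variance $S_{v}^{2}$, the exponent reduces to $-\tfrac{1}{2}(m/S_{w}^{2})(\theta - \breve{\theta}_{JEL})^{2} + o_{p}(1)$ throughout the neighborhood.

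The main obstacle, and the step that distinguishes this theorem from Theorem~\ref{thm1}, is the variance-matching identity $m/S_{w}^{2} = 1/V(\breve{\theta}_{JEL}) + o_{p}(1)$. Here I would use the definition of the scale factor $m$ as a design-consistent estimator of $S_{v}^{2}/V_{p}(\hat{V}_{GR})$, together with the fact that the calibration constraint $\sum_{i \in s}\tilde{w}_{i}(s)\mathbf{x}_{i} = \mathbf{X}$ forces the calibration-weighted estimator $\breve{\theta}_{JEL} = \sum_{i \in s}\tilde{w}_{i}(s)\hat{v}_{i}$ to be asymptotically equivalent to the GREG estimator $\hat{V}_{GR}$, so that $V(\breve{\theta}_{JEL}) = V_{p}(\hat{V}_{GR})\bigl(1 + o_{p}(1)\bigr)$. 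Combining $S_{w}^{2} = S_{v}^{2}(1 + o_{p}(1))$ with the consistency of $m$ for $S_{v}^{2}/V_{p}(\hat{V}_{GR})$ then yields $m/S_{w}^{2} = 1/V_{p}(\hat{V}_{GR})\bigl(1 + o_{p}(1)\bigr) = 1/V(\breve{\theta}_{JEL})\bigl(1 + o_{p}(1)\bigr)$, which is exactly the coefficient appearing in the claimed Gaussian exponent. Establishing the GREG-equivalence of the calibration estimator rigorously—controlling the remainder so that it stays $o_{p}(1)$ after multiplication by $m = O_{p}(1)$—is the delicate part; the remaining normalization of $\pi_{3}$ then follows by the same Laplace/normalizing-constant argument used to close Theorem~\ref{thm1}.
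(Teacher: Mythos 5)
Your proposal is correct and follows essentially the same route as the paper's proof of Theorem \ref{thm3}: bound the Lagrange multiplier from (\ref{Sol3}) via (C1) and (C4) exactly as in Theorem \ref{thm1} with $\tilde{w}_{i}(s)$ replacing $\tilde{d}_{i}(s)$, Taylor-expand the weighted log-likelihood into a quadratic in $\theta - \breve{\theta}_{JEL}$, and identify the coefficient through the choice of $m$; in fact you make explicit the variance-matching step (asymptotic GREG-equivalence of the calibrated estimator together with consistency of $m$ for $S_{v}^{2}/V_{p}(\hat{V}_{GR})$) that the paper leaves implicit in simply defining $V(\breve{\theta}_{JEL}) = m^{-1}\sum_{i \in s}\tilde{w}_{i}(s)(\hat{v}_{i}-\theta)^{2}$. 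One slip to correct: since $V_{p}(\hat{V}_{GR}) = O(n^{-1})$, the scale factor satisfies $m = O_{p}(n)$, not $O_{p}(1)$ as your closing remark states, so the cubic remainder of the expansion must be shown to be $o_{p}(n^{-1})$ before multiplication by $m$ --- this does follow from (C1) via $|\lambda|^{3}\max_{i \in s}|\hat{v}_{i}-\bar{V}| \sum_{i \in s}\tilde{w}_{i}(s)(\hat{v}_{i}-\bar{V})^{2} = o_{p}(n^{-1})$, and is consistent with the paper's $o_{p}(m/n)$ remainder, so the argument survives, but your stated justification for the remainder control needs this repair (and note $m = O_{p}(1)$ would in fact make the Gaussian exponent degenerate over the $O_{p}(n^{-1/2})$ neighborhood, contradicting your own final formula).
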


\vskip 3mm

\section{Simulation studies}
In this section, we employ the Rao-Sampford method \citep{Rao65, Sampford67} for sampling without replacement, incorporating the Bayesian jackknife pseudo-empirical likelihood, to conduct simulation studies. Our goal is to assess the performance of various metrics, including coverage probabilities (CP), lower (L) and upper (U) tail error rates, average lengths of confidence intervals (AL), and average lower bound (LB) for the 95\% confidence intervals.

To generate finite populations, we adopt the following model \citep{Wu06}:
\begin{equation*}
y_{i}=\beta _{0}+\beta _{1}x_{i}+\sigma \varepsilon _{i},
\end{equation*}
where $\beta _{0}=\beta _{1}=1$, $x_{i}\sim \text{exp}(1)$, and $\varepsilon _{i}\sim \mathcal{N} (0,1)$. The population size is set as $N=1000$, and we consider two sample sizes: $n=100$ and $150$, corresponding to sampling fractions of 10\% and 15\%, respectively. To avoid extremely small values of $x_{i}$, a suitable constant number is added to all $x_{i}$. We examine two different values of $\sigma$ to reflect varying correlations between $y$ and $x$: $\rho (y,x)=0.3$ and $0.5$. The resulting finite populations remain consistent across repeated simulation runs. Our simulations are implemented in R, following the algorithms outlined in \citet{Wu04, Wu05} for accuracy and reliability.

We consider two examples: the probability-weighted moment and the estimate of variance. For the Bayesian jackknife pseudo-empirical likelihood method, we denote the interval based on no auxiliary variable as BJEL, the interval based on the basic design weights $d_{i}$ as $\rm{BJEL_{d}}$, and the interval based on the calibration weights $w_{i}$ as $\rm{BJEL_{w}}$. We also give corresponding results for intervals based on the jackknife empirical likelihood, which are denoted as JEL, $\rm{JEL_{d}}$ and $\rm{JEL_{w}}$. All the results are obtained from $B = 1000$ simulation runs.

\subsection*{Example 4.1: sample probability weighted moment}
Consider the probability weighted moment, $\theta = E \left \{ yF\left (  y\right ) \right \}$, where $F$ is the distribution function. The sample probability weighted moment is a $U$-statistic with the kernel $h= \max\left ( x,y \right )/2$.

Table \ref{tab1} presents the results of Example 4.1, which can be summarized as follows:

(a) The intervals of the Bayesian jackknife pseudo-empirical likelihood method perform better than those from the jackknife pseudo-empirical likelihood method, with coverage probabilities closer to the nominal value and more balanced tail error rates.

(b) The all intervals JEL, $JEL_d$, $JEL_w$, $BJEL_d$ and $BJEL_w$ have comparable average length.

\begin{table}[htbp]
\centering
\small 
\caption{Coverage probabilities, average lengths of the confidence intervals, and tail error rates for the probability-weighted moment in Example 4.1 are examined at a nominal level of 0.95.}\vskip -1mm
\label{tab1}
\begin{tabular}{cccccccc}
					\hline
					\makebox[0.1\textwidth][c]{$\rho$}  & \makebox[0.1\textwidth][c]{n} & \makebox[0.1\textwidth][c]{CI} & \makebox[0.1\textwidth][c]{CP(\%)} & \makebox[0.1\textwidth][c]{L} & \makebox[0.1\textwidth][c]{U} & \makebox[0.1\textwidth][c]{AL} & \makebox[0.1\textwidth][c]{LB} \\
					\hline
					0.3	&	100	&			 JEL		 	 & 93.1 & 2.7 & 4.2 & 0.627 & 3.611\\
							&			&			 BJEL		 	 & 94.0 & 2.8 & 3.2 & 0.639 & 3.623\\
							&			&   $JEL_{d}$  & 92.6 & 1.5 & 5.9 & 0.598 & 3.626\\
							&			&  $BJEL_{d}$  & 93.2 & 1.7 & 5.1 & 0.605 & 3.623\\
							&			&   $JEL_{w}$  & 92.6 & 1.5 & 5.9 & 0.598 & 3.619\\
							&			&  $BJEL_{w}$  & 93.1 & 2.0 & 4.9 & 0.608 & 3.630\\
							&	150	&			 JEL		 	 & 93.5 & 1.5 & 5.0 & 0.499  & 3.667\\
							&			&			 BJEL		 	 & 94.7 & 1.6 & 3.7 & 0.506 & 3.675\\
							&			&   $JEL_{d}$  & 92.4 & 1.3 & 6.3 & 0.480 & 3.675\\
							&			&  $BJEL_{d}$  & 93.3 & 1.4 & 5.3 & 0.484 & 3.682\\
							&			&   $JEL_{w}$  & 92.6 & 1.3 & 6.1 & 0.479 & 3.676\\
							&			&  $BJEL_{w}$  & 93.4 & 1.4 & 5.2 & 0.485 & 3.684\\
					0.5	&	100	&			 JEL		 	 & 93.6 & 2.2 & 4.2 & 0.375 & 3.368\\
							&			&			 BJEL		 	 & 94.2 & 2.7 & 3.1 & 0.385 & 3.376\\
							&			&   $JEL_{d}$  & 91.9 & 1.4 & 6.7 & 0.319 & 3.382\\
							&			&  $BJEL_{d}$  & 92.0 & 2.0 & 6.0 & 0.323 & 3.387\\
							&			&   $JEL_{w}$  & 91.5 & 1.7 & 6.8 & 0.321 & 3.386\\
							&			&  $BJEL_{w}$  & 92.3 & 2.3 & 5.4 & 0.327 & 3.392\\
							&	150	&			 JEL		 	 & 94.7 & 1.3 & 4.0 & 0.296 & 3.399\\
							&			&			 BJEL		 	 & 95.3 & 1.5 & 3.2 & 0.302 & 3.405\\
							&			&   $JEL_{d}$  & 91.9 & 1.3 & 6.8 & 0.256 & 3.415\\
							&			&  $BJEL_{d}$  & 92.2 & 1.4 & 6.4 & 0.258 & 3.418\\
							&			&   $JEL_{w}$	 & 92.1 & 1.3 & 6.6 & 0.257 & 3.417\\
							&			&  $BJEL_{w}$  & 93.1 & 1.5 & 5.4 & 0.260 & 3.421\\
					\hline
\end{tabular}
\end{table}

\subsection*{Example 4.2: sample variance}
Consider the sample variance, where the kernel $h = \left ( x-y \right )^{2}/2$. The $U$-statistic can be expressed as $U = \sum_{i=1}^{n} \sum_{j=1}^{n} \left ( y_{i} - y_{j} \right )^{2} / \left \{ n \left ( n-1 \right ) \right \}$.

Table \ref{tab2} presents the summary of results for Example 4.2. The table indicates that the coverage probabilities for the BJEL, $\rm{BJEL_{d}}$, and $\rm{BJEL_{w}}$ methods are all near the nominal values compared to the corresponding jackknife empirical likelihood, despite a small to moderate expansion in interval lengths. Furthermore, the Bayesian jackknife empirical likelihood method also partially mitigates the issue of imbalanced tail error rates that can arise from the pseudo-jackknife empirical likelihood method. It is worth noting that in the case of large $n$, the Bayesian jackknife empirical likelihood may be overfitting.

\begin{table}[htbp]
\centering
\small 
\caption{Coverage probabilities, average lengths of the confidence intervals, and tail error rates for the variance in Example 4.2 are examined at a nominal level of 0.95.}\vskip -1mm
\label{tab2}
\begin{tabular}{cccccccc}
					\hline
					\makebox[0.1\textwidth][c]{$\rho$}  & \makebox[0.1\textwidth][c]{n} & \makebox[0.1\textwidth][c]{CI} & \makebox[0.1\textwidth][c]{CP(\%)} & \makebox[0.1\textwidth][c]{L} & \makebox[0.1\textwidth][c]{U} & \makebox[0.1\textwidth][c]{AL} & \makebox[0.1\textwidth][c]{LB}\\
					\hline
					0.3	&	100	&			 JEL		 	 & 93.4 & 1.0 & 5.6 & 5.887 & 8.242\\
							&			&			 BJEL		 	 & 94.8 & 1.3 & 3.9 & 6.136 & 8.414\\
							&			&   $JEL_{d}$  & 93.4 & 1.0 & 5.6 & 5.820 & 8.236\\
							&			&  $BJEL_{d}$  & 94.0 & 1.6 & 4.4 & 5.984 & 8.402\\
							&			&   $JEL_{w}$  & 93.7 & 1.0 & 5.3 & 5.857 & 8.255\\
							&			&  $BJEL_{w}$  & 94.4 & 1.6 & 4.0 & 6.101 & 8.426\\
							&	150	&			 JEL		 	 & 94.8 & 0.5 & 4.7 & 4.663 & 8.788\\
							&			&			 BJEL		 	 & 94.9 & 0.7 & 4.4 & 4.813 & 8.907\\
							&			&   $JEL_{d}$  & 94.3 & 0.6 & 5.1 & 4.658 & 8.795\\
							&			&  $BJEL_{d}$  & 94.4 & 1.1 & 4.5 & 4.766 & 8.912\\
							&			&   $JEL_{w}$  & 94.3 & 0.7 & 5.0 & 4.662 & 8.809\\
							&			&  $BJEL_{w}$  & 94.8 & 1.1 & 4.1 & 4.810 & 8.929\\
					0.5	&	100	&			 JEL		 	 & 93.6 & 1.9 & 4.5 & 2.220 & 3.019\\
							&			&			 BJEL		 	 & 93.8 & 3.1 & 3.1 & 2.357 & 3.089\\
							&			&   $JEL_{d}$  & 93.2 & 1.8 & 5.0 & 2.336 & 3.017\\
							&			&  $BJEL_{d}$  & 94.2 & 2.5 & 3.3 & 2.179 & 3.079\\
							&			&   $JEL_{w}$  & 93.2 & 2.2 & 4.6 & 2.183 & 3.036\\
							&			&  $BJEL_{w}$  & 94.6 & 3.0 & 2.4 & 2.313 & 3.103\\
							&	150	&			 JEL		 	 & 95.2 & 1.0 & 3.8 & 1.743 & 3.218\\
							&			&			 BJEL		 	 & 95.2 & 1.9 & 2.9 & 1.832 & 3.266\\
							&			&   $JEL_{d}$  & 94.9 & 0.9 & 4.2 & 1.781 & 3.225\\
							&			&  $BJEL_{d}$  & 95.8 & 1.2 & 3.0 & 1.742 & 3.269\\
							&			&   $JEL_{w}$	 & 95.5 & 1.0 & 3.5 & 1.733 & 3.239\\
							&			&  $BJEL_{w}$  & 95.8 & 1.6 & 2.6 & 1.818 & 3.288\\
					\hline
\end{tabular}
\end{table}

\section{Application to the China Household Finance Survey data}
The China Household Finance Survey (CHFS) is an extensive nationwide sample survey project designed to collect comprehensive and meticulous information on household finance at the micro level. It aims to provide a detailed description of household economic and financial behaviors. The data utilized in this section are derived from the China Household Finance Survey project (CHFS), managed by the Survey and Research Center for China Household Finance at the Southwestern University of Finance and Economics.

We employ the proposed jackknife pseudo-empirical likelihood method on the 2019 CHFS dataset, which encompasses 29 provinces, 170 cities, 345 districts and counties, and 1,360 villages (residential) committees in China. The sample size consists of 34,643 households, rendering the data representative at both the national and provincial levels. It is important to note that the CHFS datasets are not publicly accessible but can be obtained through an approval process outlined at (\href{https://chfs.swufe.edu.cn/}{https://chfs.swufe.edu.cn/}).

The variable of interest in this study is resident happiness, denoted as Y. The dataset includes a column of household sample weight ($w_i$) for analytical purposes. We consider the sample to be drawn through a single-stage unequal probability sampling design, with the design weight ($d_i = \pi_i^{-1}$) equivalent to the calibration weight ($w_i$). The auxiliary population information ($x_i$) is derived from the first, second, and third-tier city division variables.

We focus on investigating the variance of resident happiness, represented by the $U$-statistic
\begin{equation*}
U = \sum_{i=1}^{n} \sum_{j=1}^{n} \left ( y_{i} - y_{j} \right )^{2} / \{ n \left ( n-1 \right ) \},
\end{equation*}
where $n=34,643$. Here we also consider non-information prior. The 95\% credible interval for the Bayesian jackknife empirical likelihood, without auxiliary population information, is estimated to be (0.762,0.785). For comparison, the confidence intervals for the Bayesian jackknife empirical likelihood with basic design weights and Bayesian jackknife empirical likelihood with calibration weights are both (0.759,0.782). Based on these findings, we can conclude that the city's level has a relatively minimal impact on resident happiness.

\section{Conclusion}
This article aims to demonstrate the potential of employing the Bayesian jackknife pseudo-empirical likelihood approach to address the estimation challenges associated with $U$-statistics in complex surveys. The estimation process takes into account the characteristics of the sampling design and the utilization of auxiliary information. The efficacy of this approach is assessed through both theoretical justification and empirical evidence. Compared to the jackknife pseudo-empirical likelihood interval, the Bayesian jackknife empirical likelihood interval exhibits superior performance in terms of balanced tail error rates and coverage probabilities.

We are currently exploring the extension of our proposed Bayesian jackknife empirical likelihood method to other survey sampling techniques, such as information prior. Recent years have witnessed significant advancements in non-probability sampling methods. It is intriguing to investigate how to capitalize on the strengths of both probability and non-probability sampling and combine them to achieve more accurate statistical estimates. For combined samples of this nature, employing the method proposed in this study holds great promise for similar analyses.

\section*{Acknowledgments}
This work was supported by the National Natural Science Foundation of China (12031016).

\appendix
\section{Proofs}

In this section, we provide rigorous mathematical proofs of the key findings.

%----------------------------------------The proof of theorem 1---------------------------
\begin{proof}[Proof of Theorem \rm\ref{thm1}]

By rephrasing $\tilde{d}_{i}\left ( s \right )\left ( \hat{v}_{i}-\bar{V} \right )$ as 
\begin{equation*}
\tilde{d}_{i}\left ( s \right )\left ( \hat{v}_{i}-\bar{V} \right )\left [ 1+\lambda\left ( \hat{v}_{i}-\bar{V} \right )-\lambda\left ( \hat{v}_{i}-\bar{V} \right ) \right ],
\end{equation*}
we can reorganize (\ref{Sol1}) to derive
\begin{equation} \label{A.1}
\lambda \sum_{i \in s}\frac{\tilde{d}_{i}\left ( s \right )\left ( \hat{v}_{i}-\bar{V} \right )^2}{1+\lambda\left ( \hat{v}_{i}-\bar{V} \right )}= \sum_{i \in s}\tilde{d}_{i}\left ( s \right ) \hat{v}_{i}-\bar{V}.
\end{equation}
It follows from (\ref{A.1}) that 
\begin{equation} \label{A.2}
\frac{\left |\lambda  \right |}{1+\left |\lambda  \right |u^{\ast }}\sum_{i \in s} \tilde{d}_{i}\left ( s \right )\left ( \hat{v}_{i}-\bar{V} \right )^2\leq \left | \sum_{i \in s} \tilde{d}_{i}\left ( s \right )\hat{v}_{i}-\bar{V} \right |,
\end{equation}
where $u^{\ast }=max_{i \in s}\left | \hat{v}_{i}-\bar{V} \right |$, which is of order $o_{p} \left (n^{1/2} \right )$ by Condition \noindent{\bf (C1)}. It follows from \citet{Hajek60, Hajek64}, we can get the central limit theorem for a Horvitz-Thompson estimator, namely $\hat{\bar{V}}_{HT}=N^{-1} \sum_{i \in s} d_{i} \hat{v}_{i}$ is asymptotically normally distributed. Then, we can obtain $\hat{\bar{V}}_{HT}=\bar{V}+O_{p}\left( n^{-1/2} \right )$.

Under Condition \noindent{\bf (C2)}, we have $\hat{N}/N=1+O_{p}\left( n^{-1/2} \right )$, where $\hat{N}=\sum_{i \in s} d_{i}$, which imply
\begin{equation*}
\sum_{i \in s} \tilde{d}_{i}\left ( s \right )\hat{v}_{i}=\hat{\bar{V}}_{HT}/\left ( \hat{N}/N \right )= \bar{V}+O_{p}\left ( n^{-1/2} \right ).
\end{equation*}
Noting that $\sum_{i \in s} \tilde{d}_{i}\left ( s \right ) \left( \hat{v}_{i}-\bar{V} \right )^2$ is the Hajek-type estimator of $S_{v}^{2}$, which is of order $O \left ( 1 \right )$. It follows from (\ref{A.2}) that we must have $\lambda = O_{p}\left ( n^{-1/2} \right )$ and, consequently, $max_{i \in s} \left | \lambda \left ( \hat{v}_{i}-\bar{V} \right ) \right |= o_{p}\left (1 \right )$. This together with (\ref{A.1}) leads to 
\begin{equation*}
\lambda =\left \{ \sum_{i \in s} \tilde{d}_{i}\left ( s \right ) \left ( \hat{v_{i}}-\bar{V} \right )^2 \right \}^{-1}\left ( \sum_{i \in s} \tilde{d}_{i}\left ( s \right )\hat{v_{i}}-\bar{V} \right )+o_{p}\left ( n^{-1/2} \right ).
\end{equation*}
Using a Taylor series expansion of $log \left( 1+x \right )$ at $x=\lambda \left ( \hat{v}_{i}-\bar{V} \right)$ up to the second order, we obtain 
\beq
2 \ell_{JEL_{d}} \left ( \theta \right ) - 2\ell_{JEL_{d}} \left ( \hat{\theta}_{JEL} \right ) = - n^{\ast } \left ( \theta - \hat{\theta}_{JEL} \right )^{2} / \left ( \sum_{i \in s} \tilde{d}_{i}\left ( s \right ) \left (\hat{v}_{i}- \theta \right )^2 \right )+o_{p}\left ( 1 \right ). \nonumber
\eeq
Using the above argument, we show that
\beq
\ell_{JEL_{d}} \left ( \theta \right ) = \ell_{JEL_{d}} \left ( \hat{\theta}_{JEL} \right ) - \frac{1}{2} \left ( \theta - \hat{\theta}_{JEL} \right )^{2} / V \left ( \hat{ \theta}_{JEL} \right ) + o_{p} \left ( 1 \right ) \nonumber
\eeq
for $ \theta = \left ( \hat{\theta}_{JEL} \right ) + O_{p} \left ( n^{-1/2} \right )$ and $V \left ( \hat{\theta}_{JEL} \right ) = n^{\ast } / \left ( \sum_{i \in s} \tilde{d}_{i}\left ( s \right ) \left (\hat{v}_{i}- \theta \right )^2 \right )$. Consequently, we have 
\beq
log \pi_{1} \left ( \theta | \mathbf{\hat{v}} \right ) = \frac{1}{2} \left ( \theta - \hat{\theta}_{JEL} \right )^{2} / V \left ( \hat{ \theta}_{JEL} \right ) + o_{p} \left ( 1 \right ). \nonumber
\eeq
\end{proof}

%-------------------------------The proof of theorem 2------------------------------------------
\begin{proof}[\bf\it Proof of Theorem \rm\ref{thm2}]

The discussions concerning the order of magnitude and the asymptotic expansion of the relevant Lagrange multiplier follow a similar approach to that presented in the proof of Theorem \ref{thm1}. Nevertheless, this proof incorporates two essential and distinct arguments. The $\tilde{p}_{i}$ which maximize $\ell_{JEL_{d}} \left( \theta \right )$ subject to $\sum_{i \in s}p_{i}=1$ and $\sum_{i \in s}p_{i}x_{i}=\bar{X}$ are given by $\tilde{p}_{i}=\tilde{d}_{i} \left ( s \right )/ \left \{ 1+\lambda^{T} \left ( x_{i}-\bar{X} \right ) \right \}$, where the $\lambda$ is the solution to 
\begin{equation*}
\sum_{i \in s} \frac{\tilde{d}_{i} \left ( s \right ) \left ( \bf{x}_{i} - \bar{\bf{X}} \right ) }{1 + {\lambda}' \left ( \bf{x}_{i} - \bar{\bf{X}} \right )} = 0.
\end{equation*}
Under Conditions \noindent{\bf (C2), (C3)} and $\hat{\bar{X}}_{HT}=N^{-1} \sum_{i \in s} d_{i} x_{i}$ is asymptotically normally distributed, we can show that $\left \| \lambda \right \|=O_{p} \left ( n^{-1/2} \right )$ and 
\begin{equation*}
\lambda = \left \{ \sum_{i \in s} \tilde{d}_{i} \left ( s \right ) \left ({x_{i}-\bar{X}} \right ) \left ( {x_{i}-\bar{X}} \right )^{T} \right \}^{-1} \left ( \sum_{i \in s} \tilde{d}_{i} \left( s \right ) x_{i}-\bar{X} \right ) + o_{p} \left ( n^{-1/2} \right ).
\end{equation*}
By omitting the term $n \sum_{i \in s} \tilde{d}_{i} \left ( s \right ) log \left ( \tilde{d}_{i} \left ( s \right ) \right )$, we obtain the following asymptotic expansion for $\ell_{JEL_{d}} \left ( \tilde{p} \right )$: 
\begin{equation} \label{A.3}
-\frac{n}{2} \left ( \sum_{i \in s} \tilde{d}_{i} x_{i} - \bar {X} \right )^{T}\left \{ \sum_{i \in s} \tilde{d}_{i} \left ( s \right )\left ({x_{i}-\bar{X}} \right )   \left ( {x_{i}-\bar{X}} \right )^{T} \right \}^{-1}\left ( \sum_{i \in s} \tilde{d}_{i} x_{i}- \bar {X} \right )+o_{p} \left ( 1 \right ).
\end{equation}

To obtain a similar expansion for $\ell_{JEL_{d}} \left ( \tilde{p} \left ( \bar{V} \right ) \right )$ where $\tilde{p} \left ( \bar{V} \right )$ maximize $\ell_{JEL_{d}} \left ( p \right )$ subject to $\sum_{i \in s}p_{i}=1$, (\ref{Con1}) and (\ref{Con2}), our first crucial argument is to reformulate the constrained maximization problem as follows: let $r_{i}=\hat{v}_{i}- \bar{V}-B^{T} \left ( x_{i}-\bar{X} \right )$ where $B$ is defined by (\ref{B}). Then the set of constraints is equivalent to
\begin{equation} \label{A.4}
\sum_{i \in s}p_{i}=1,\quad \sum_{i \in s}p_{i}x_{i}=\bar{X} \quad and \quad \sum_{i \in s}p_{i}r_{i} = 0.
\end{equation}
With complete parallel development that leads to $\ell_{JEL_{d}}$ given by (\ref{A.3}), maximizing $\ell_{JEL_{d}}\left ( \tilde{p} \right )$ subject to (\ref{A.4}) leads to the following expansion for $\ell_{JEL_{d}} \left ( \tilde{p} \left ( \bar{V} \right ) \right )$
\begin{equation} \label{A.5}
-\frac{n}{2} \left ( \sum_{i \in s} \tilde{d}_{i} \mathbf{u}_{i} - \bar {U} \right )^{T}\left \{ \sum_{i \in s} \tilde{d}_{i} \left ( s \right ) \left ( \mathbf{u}_{i} - \bar {U} \right )  \left ( \mathbf{u}_{i} -\bar {U} \right )^{T} \right \}^{-1}\left ( \sum_{i \in s} \tilde{d}_{i} \mathbf{u}_{i} - \bar {U} \right )+o_{p} \left ( 1 \right ),
\end{equation}
where $\mathbf{u}_{i}=\left ( x_{i}^{T}, r_{i} \right )^{T}$ and $\bar{U}= \left (\bar{X}^{T},0 \right )^{T}$. Our second crucial argument is the observation that $\sum^{N}_{i=1} \left (x_{i}-\bar{X} \right )r_{i}$, i.e., the matrix involved in the middle of (\ref{A.5}) is an estimate for its population counterpart which is block diagonal. It is straightforward to show that
\beq
2 \ell_{JEL_{d}} \left ( \theta \right ) - 2 \ell_{JEL_{d}} \left ( \tilde{\theta}_{JEL} \right ) = - n^{\ast } \left ( \theta - \tilde{\theta}_{JEL} \right ) ^{2} / V \left ( \tilde{ \theta}_{JEL} \right ) + o_{p} \left ( 1 \right ). \nonumber
\eeq

Using the above argument, we show that
\beq
\ell_{JEL_{d}} \left ( \theta \right ) = \ell_{JEL_{d}} \left ( \tilde{\theta}_{JEL} \right ) - \frac{1} {2} \left ( \theta - \tilde{\theta}_{JEL} \right )^{2} / V \left ( \tilde{ \theta}_{JEL} \right ) + o_{p} \left ( 1 \right ) \nonumber
\eeq
for $ \theta = \tilde{\theta}_{JEL} + O_{p} \left ( n^{-1/2} \right )$. Consequently, we have
\beq
log \pi_{2} \left ( \theta | \mathbf{\hat{v}}, \mathbf{x} \right ) = - \frac{1} {2} \left ( \theta - \tilde{\theta}_{JEL} \right )^{2} / V \left ( \tilde{ \theta}_{JEL} \right ) + o_{p} \left ( 1 \right ). \nonumber
\eeq
\end{proof}

%----------------------------------------The proof of theorem 3---------------------------
\begin{proof}[Proof of Theorem \rm\ref{thm3}]

This proof bears a resemblance to that of Theorem \ref{thm1}, differing solely in the incorporation of distinct design weighted. By rephrasing $\tilde{w}_{i}\left ( s \right )\left ( v_{i}-\bar{V} \right )$ as 
\begin{equation*}
\tilde{w}_{i}\left ( s \right )\left ( v_{i}-\bar{V} \right )\left [ 1+\lambda\left ( \hat{v}_{i}-\bar{V} \right )-\lambda\left ( \hat{v}_{i}-\bar{V} \right ) \right ],
\end{equation*}
we can recognize (\ref{Sol3}) to derive
\begin{equation} \label{A.6}
\lambda \sum_{i \in s}\frac{\tilde{w}_{i}\left ( s \right )\left ( v_{i}-\bar{V} \right )^2}{1+\lambda\left ( v_{i}-\bar{V} \right )}= \sum_{i \in s} \tilde{w}_{i} \left ( s \right ) \hat{v}_{i}- \bar{V}.
\end{equation}
It follows from (\ref{A.6}) that 
\begin{equation} \label{A.7}
\frac{\left |\lambda  \right |}{1+\left |\lambda  \right |u^{\ast }}\sum_{i \in s} \tilde{w}_{i}\left ( s \right )\left ( \hat{v}_{i}-\bar{V} \right )^2\leq \left | \sum_{i \in s} \tilde{w}_{i}\left ( s \right )\hat{v}_{i}-\bar{V} \right |,
\end{equation}
where $u^{\ast }=max_{i \in s}\left | \hat{v}_{i}-\bar{V} \right |$, which is of order $o_{p} \left (n^{1/2} \right )$ by Condition \noindent{\bf (C1)}. It follows from \citet{Hajek60, Hajek64}, we can get the central limit theorem for a Horvitz-Thompson estimator, namely $\breve{\bar{V}}_{HT}=N^{-1} \sum_{i \in s} w_{i} \hat{v}_{i}$ of $\bar{V}$ is asymptotically normally distributed. Then, we can require $\breve{\bar{V}}_{HT}=\bar{V}+O_{p}\left( n^{-1/2} \right )$.

Under Condition \noindent{\bf (C4)}, we have $\breve{N}/N=1+O_{p}\left( n^{-1/2} \right )$, where $\breve{N}=\sum_{i \in s} w_{i}$, which imply 
\begin{equation*}
\sum_{i \in s} \tilde{w}_{i}\left ( s \right )\hat{v}_{i}=\breve{\bar{V}}_{HT}/ \left ( \breve{N}/N \right )= \bar{V}+O_{p}\left ( n^{-1/2} \right ).
\end{equation*}
Noting that $\sum_{i \in s} \tilde{w}_{i}\left ( s \right ) \left( \hat{v}_{i}-\bar{V} \right )^2$ is the Hajek-type estimator of $S_{v}^{2}$ which is of order $O \left ( 1 \right )$, it follows from (\ref{A.2}) that we must have $\lambda = O_{p}\left ( n^{-1/2} \right )$ and, consequently, $max_{i \in s} \left | \lambda \left ( \hat{v}_{i}-\bar{V} \right ) \right |= o_{p}\left (1 \right )$. This together with (\ref{A.6}) leads to 
\begin{equation*}
\lambda =\left \{ \sum_{i \in s} \tilde{w}_{i}\left ( s \right ) \left ( \hat{v_{i}}-\bar{V} \right )^2 \right \}^{-1}\left ( \sum_{i \in s} \tilde{w}_{i}\left ( s \right )\hat{v_{i}}-\bar{V} \right )+o_{p}\left ( n^{-1/2} \right ).
\end{equation*}
Using a Taylor series expansion of $log \left( 1+x \right )$ at $x=\lambda \left ( \hat{v}_{i}-\bar{V} \right)$ up to the second order, we obtain
\beq
2 \ell_{JEL_{w}} \left ( \theta \right ) - 2 \ell_{JEL_{w}} \left ( \breve{\theta}_{JEL} \right ) = - m \left ( \theta - \breve{\theta}_{JEL} \right )^{2} / \left ( \sum_{i \in s} \tilde{w}_{i}\left ( s \right ) \left (\hat{v}_{i}- \theta \right )^2 \right ) + o_{p} \left ( \frac{m}{n} \right ). \nonumber
\eeq
Using the above argument, we show that
\beq
\ell_{JEL_{w}} \left ( \theta \right ) = \ell_{JEL_{w}} \left ( \breve {\theta}_{JEL} \right ) - \frac{1}{2} \left ( \theta - \breve{\theta}_{JEL} \right )^{2} / V \left ( \breve{ \theta}_{JEL} \right ) + o_{p} \left ( \frac{m}{n} \right ) \nonumber
\eeq
for $ \theta = \left ( \breve{\theta}_{JEL} \right ) + O_{p} \left ( n^{-1/2} \right )$ and $V \left ( \hat{\theta}_{JEL} \right ) = \frac{1}{m} \sum_{i \in s} \tilde{w}_{i}\left ( s \right ) \left (\hat{v}_{i}- \theta \right )^2 $. Consequently, we have
\beq
log \pi_{3} \left ( \theta | \mathbf{\hat{v}} \right ) =  - \frac{1}{2} \left ( \theta - \breve{\theta}_{JEL} \right )^{2} / V \left ( \breve{ \theta}_{JEL} \right ) + o_{p} \left ( \frac{m}{n} \right ). \nonumber
\eeq
\end{proof}
\bibliographystyle{cas-model2-names}
% Loading bibliography database
\bibliography{ref}

\end{document}